\definecolor{webgreen}{rgb}{0,.5,0}
\definecolor{webbrown}{rgb}{.6,0,0}
\def\Enn{\mathbb{N}}
\def\suchthat{\, : \, }
\begin{document}

\theoremstyle{plain}
\newtheorem{theorem}{Theorem}
\newtheorem{corollary}[theorem]{Corollary}
\newtheorem{lemma}[theorem]{Lemma}
\newtheorem{proposition}[theorem]{Proposition}

\theoremstyle{definition}
\newtheorem{definition}[theorem]{Definition}
\newtheorem{example}[theorem]{Example}
\newtheorem{conjecture}[theorem]{Conjecture}

\theoremstyle{remark}
\newtheorem{remark}[theorem]{Remark}

\title{Abelian Complexity and Synchronization}

\author{Jeffrey Shallit\\
School of Computer Science \\
University of Waterloo \\
Waterloo, ON  N2L 3G1 \\
Canada \\
\href{mailto:shallit@uwaterloo.ca}{\tt shallit@uwaterloo.ca} }

\maketitle

\begin{abstract}
We present a general method for computing the abelian 
complexity $\rho^{\rm ab}_{\bf s} (n)$
of an automatic sequence $\bf s$ in the case
where (a) $\rho^{\rm ab}_{\bf s} (n)$ is bounded by a constant and (b)
the Parikh vectors of the length-$n$ prefixes of $\bf s$ form
a synchronized sequence.

We illustrate the idea in detail,
using the free software {\tt Walnut} to
compute the abelian complexity of 
the Tribonacci word ${\bf TR} = 0102010\cdots$, 
the fixed point of the morphism $0 \rightarrow 01$, $1 \rightarrow 02$,
$2 \rightarrow 0$.
Previously, Richomme, Saari, and Zamboni showed that the
abelian complexity of this word lies in $\{ 3,4,5,6,7 \}$, and
Turek gave a Tribonacci automaton computing it.  We are able to
``automatically''
rederive these results, and more, using the method presented here.
\end{abstract}

\section{Introduction}

Let $\Sigma = \{ a_1, a_2, \ldots, a_k \}$ be a finite ordered alphabet,
with an ordering on the letters given by
$a_1 < a_2 < \cdots < a_k$.   Let $w \in \Sigma^*$.
We define $|w|_{a_i}$ to be the number of occurrences of $a_i$ in
$w$.   Thus, for example, $|{\tt banana}|_{\tt a} = 3$.

The {\it Parikh vector}
$\psi(w)$ for $w \in \Sigma^*$ is defined to be 
$(|w|_{a_1}, \ldots, |w|_{a_k})$, the $k$-tuple that
counts the number of occurrences of each letter in $w$.
Thus, for example, if ${\tt v} < {\tt l} < {\tt s} < {\tt e} $, then
$\psi({\tt sleeveless}) = (1,2,3,4)$.

Let ${\bf s} \in \Sigma^\omega$ be an infinite sequence over $\Sigma$.
The {\it abelian complexity function} $\rho^{\rm ab}_{\bf s} (n)$ is defined
to be the number of distinct Parikh vectors of length-$n$ factors
of $\bf s$.   This concept was introduced by
Richomme, Saari, and Zamboni \cite{Richomme&Saari&Zamboni:2011}
and has been studied extensively since then.

We call a numeration system for $\Enn$ {\it regular} if
\begin{itemize}
\item[(a)]  Elements of $\Enn$ have unique representations
(up to leading zeros);

\item[(b)]  The set of all valid representations for $\Enn$ is recognizable by
a finite automaton;

\item[(c)]  The exists another  finite automaton recognizing the relation
$\{ (x,y,z) \suchthat x = y + z \}$.
By this we mean that 
there is an automaton accepting, in parallel, the representations
of $x, y, z$ satisfying $x = y+z$.    Here any shorter representation
is padded with leading zeros, if necessary, so that the representations of
all three numbers have the same length, and can be fed into the
automaton digit-by-digit, in parallel.
\end{itemize}

Examples of regular numeration systems include base $k$, for integers
$k \geq 2$ \cite{Allouche&Shallit:2003};
Fibonacci numeration \cite{Frougny:1986}; Tribonacci numeration
\cite{Mousavi&Shallit:2015}; and Ostrowski numeration systems
\cite{Baranwal:2020}.

In this paper we will be concerned with {\it automatic sequences}.
These are sequences $\bf s$ that can be computed by a DFAO (deterministic
finite automaton with output) $M$ in the following sense:  we express
$n$ in some regular numeration system, and feed $M$
with this representation, most significant digit first.
The output associated with the last
state reached is then ${\bf s}[n]$, the $n$'th term of the sequence
$s$.

A variation on automatic sequences is the {\it synchronized sequence},
introduced by Carpi and Maggi \cite{Carpi&Maggi:2001}.  A sequence
$(s(n))_{n \geq 0}$ taking values in $\Enn$ (or $\Enn^k$) is synchronized if
there is an automaton accepting, in parallel, the representations
of $n$ and $s(n)$, with shorter representations padded with leading
zeroes, as above.

Abelian complexity, defined above,
is a variation on (ordinary) subword complexity
$\rho_{\bf s}(n)$, which counts the number of length-$n$ factors
of $\bf s$.   It is known that for automatic sequences, the first
difference of the subword complexity function is ``automatically
computable'', in the sense that there is an algorithm that,
given the DFAO $M$, will construct another automaton $M'$ that
computes $\rho_{\bf s}(n+1) - \rho_{\bf s}(n)$ 
\cite{Goc&Schaeffer&Shallit:2013}.
In contrast,
for abelian complexity there is (in general) no such automaton,
as proved in \cite[Cor.~5.6]{Schaeffer:2013}.

In this note I will show that there is an algorithm
that, given an automatic sequence $\bf s$ satisfying
certain conditions, will
construct a DFAO computing $\rho_{\bf s}^{\rm ab} (n)$.
I will illustrate the ideas in detail for the
Tribonacci word, fixed point of the morphism
$0 \rightarrow 01$, $1 \rightarrow 02$, $2 \rightarrow 0$.

\section{The result and its proof}
\label{steps}

\begin{theorem}
Let $\bf s$ be a sequence that is automatic in some regular numeration system.
Suppose that
\begin{itemize}
\item[(a)] the abelian complexity
$\rho^{\rm ab}_{\bf s} (n)$ is bounded above by a constant, and 
\item[(b)]
the Parikh vectors of
length-$n$ prefixes of $\bf s$ form a synchronized sequence.
\end{itemize}
Then $\rho^{\rm ab}_{\bf s} (n)$ is an automatic sequence and the
DFAO computing it is effectively computable.
\end{theorem}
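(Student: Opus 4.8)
Here is the plan. Write ${\bf s} = {\bf s}[0]\,{\bf s}[1]\,{\bf s}[2] \cdots$, and for each letter $a_j$ let $P_j(m) = |{\bf s}[0]{\bf s}[1]\cdots{\bf s}[m-1]|_{a_j}$ denote the number of occurrences of $a_j$ in the length-$m$ prefix. The Parikh vector of the length-$n$ factor starting at position $i$ telescopes as $\psi({\bf s}[i]\cdots{\bf s}[i+n-1]) = \big(P_1(i+n)-P_1(i),\ldots,P_k(i+n)-P_k(i)\big)$. By hypothesis (b) the relation $\{(m,P_1(m),\ldots,P_k(m))\}$ is recognized by a finite automaton, and since the numeration system is regular we also have automata for addition and (hence) for comparison. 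Consequently the coordinate function $f_j(n,i) := P_j(i+n)-P_j(i)$ is first-order definable: $f_j(n,i)=v$ holds iff $\exists p,q\,\big(p=P_j(i+n)\wedge q=P_j(i)\wedge p = q+v\big)$. Because first-order formulas over a regular numeration system define automatic (synchronized) relations --- the theorem underlying {\tt Walnut} --- every predicate I build below will yield an effectively constructible automaton.

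The crucial observation is that each coordinate function is slowly varying in $i$. Passing from the window $[i,i+n-1]$ to $[i+1,i+n]$ deletes ${\bf s}[i]$ and appends ${\bf s}[i+n]$, so $f_j(n,i+1)-f_j(n,i)$ equals $1$ if ${\bf s}[i+n]=a_j$ and ${\bf s}[i]\neq a_j$, equals $-1$ in the opposite case, and equals $0$ otherwise; in all cases it lies in $\{-1,0,1\}$. Thus for fixed $n$ the sequence $i\mapsto f_j(n,i)$ takes \emph{every} integer value between its minimum and maximum. On the other hand, by hypothesis (a) there are at most $C$ distinct Parikh vectors among length-$n$ factors, hence at most $C$ distinct values of $f_j(n,\cdot)$. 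Combining these two facts, the range of $f_j(n,\cdot)$ is an interval of length at most $C-1$. Defining $\min_j(n)$ as the least $v$ with $(\exists i\,f_j(n,i)=v)\wedge(\forall i\,f_j(n,i)\geq v)$ --- a first-order, hence synchronized, function of $n$ --- the offset $d_j(n,i) := f_j(n,i)-\min_j(n)$ therefore takes values only in the finite set $\{0,1,\ldots,B\}$ for a bound $B \le C-1$ that I can locate effectively by testing the decidable sentence $\forall n\,\forall i\,(f_j(n,i)-\min_j(n)\le B)$ for $B=1,2,3,\ldots$ until it holds.

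Since $\min_j(n)$ depends only on $n$, two length-$n$ factors have equal Parikh vectors precisely when their offset vectors $\big(d_1(n,i),\ldots,d_k(n,i)\big)$ agree, and the latter live in the finite set $D := \{0,\ldots,B\}^k$. Hence
\[
\rho^{\rm ab}_{\bf s}(n) = \#\Big\{\, \mathbf{d}\in D \;:\; \exists i\ \big(d_1(n,i),\ldots,d_k(n,i)\big)=\mathbf{d} \,\Big\}.
\]
For each fixed $\mathbf{d}\in D$ the achievability predicate $g_{\mathbf{d}}(n) := \exists i\,\big(\bigwedge_j d_j(n,i)=d_j\big)$ is first-order in $n$, hence defines an automatic $\{0,1\}$-valued sequence; and $\rho^{\rm ab}_{\bf s}(n)=\sum_{\mathbf{d}\in D} g_{\mathbf{d}}(n)$ is a finite sum of such sequences with bounded total, so a product-automaton construction that outputs the running sum yields a DFAO computing $\rho^{\rm ab}_{\bf s}(n)$. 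Every step is effective, so the DFAO is effectively computable.

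The step I expect to carry the real weight is the boundedness of the offsets in the second paragraph: it is exactly here that both hypotheses are used in tandem, the unit-step ``discrete intermediate value'' property converting the \emph{finitely-many-values} consequence of (a) into a genuine \emph{bounded-range} statement, which is what collapses the problem onto the finite alphabet $D$. Everything else is a routine, if labor-intensive, translation into the first-order language and an invocation of the decision procedure; the care needed is only to ensure that $\min_j(n)$ and the offset predicates stay within first-order logic, which the regularity of the numeration system together with hypothesis (b) guarantees.
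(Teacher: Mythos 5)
Your proof is correct, and its skeleton matches the paper's: use hypothesis (b) plus regularity of the numeration system to make factor Parikh vectors first-order definable, use hypothesis (a) to collapse them onto a finite set of ``relative'' vectors, express membership of each possible vector by a first-order predicate, and assemble the resulting automata into a DFAO by a product construction. But you differ in two genuine ways, both worth noting. First, you normalize by the coordinate-wise minimum $\min_j(n)$, whereas the paper normalizes by the Parikh vector of the length-$n$ \emph{prefix}, working with $A_n = \{\psi({\bf s}[i..i+n-1]) - \psi({\bf s}[0..n-1]) : i \geq 0\}$; your choice keeps every quantity in $\Enn$, neatly sidestepping the negative-value awkwardness that the paper's implementation must handle with its {\tt posrange}/{\tt negrange} trick and the shift by $(1,1,1)$. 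Second, you count $\rho^{\rm ab}_{\bf s}(n)$ directly as a sum of indicator predicates $g_{\mathbf d}(n)$, one per achievable offset vector, while the paper instead tests which of the finitely many \emph{subsets} of the possibility set $S$ occurs as some $A_n$ and then combines those automata; your route is leaner for computing the complexity function itself (linearly many predicates rather than a potentially exponential subset enumeration), whereas the paper's subset route buys extra information --- it is what yields the paper's later theorem that the set-valued function $n \mapsto A_n$ is itself automatic. Finally, your second paragraph (unit-step sliding window plus pigeonhole, the ``discrete intermediate value'' argument) supplies an explicit justification for a step the paper passes over with a bare ``therefore'': bounded cardinality of each $A_n$ alone does not give a range bound uniform in $n$, and it is exactly your interval argument that guarantees the coordinate ranges are finite sets computable from the automata, so making it explicit is a genuine improvement in rigor over the paper's sketch.
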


\begin{proof}
Here is a summary of how the DFAO computing $\rho_{\bf s}^{\rm ab} (n)$
can be automatically constructed.  We use the fact that a first-order
logical formula $\varphi$ with addition, and using indexing into an automatic
sequence, is itself automatic (that is, there is an automaton accepting
the values of the free variables that make $\varphi$ true)
\cite{Charlier&Rampersad&Shallit:2011}.

\begin{enumerate}
\item Given a synchronized automaton computing the Parikh vector
of length-$n$ prefixes of $\bf s$, we can find
(by subtracting) a synchronized automaton computing the Parikh vector for
arbitrary length-$n$ factors ${\bf s}[i..i+n-1]$.   This is
first-order expressible.   The resulting automaton accepts
(in parallel) triples of the form $(i,n, \psi({\bf s}[i..i+n-1]))$.

\item The quantity $\rho^{\rm ab}_{\bf s} (n)$ is bounded 
iff there is a constant $C$ (not depending on $n$) such that the
cardinality of each set
$$A_n := \{ \psi({\bf s}[i..i+n-1]) - \psi({\bf s}[0..n-1]) 
\suchthat i \geq 0 \}$$
is bounded above by $C$.

\item In this case, the range of each coordinate
of $A_n$ is therefore finite,
and can be computed algorithmically.

\item Once we know the ranges of each coordinate, there are 
only finitely many possibilities for 
$\psi({\bf s}[i..i+n-1]) - \psi({\bf s}[0..n-1])$.
We can then compute the set of all possibilities $S$.

\item Once we have $S$, we can test each of the finitely many
subsets to see if it occurs for some $n$, and obtain
an automaton recognizing those $n$ for which it does.

\item All the different automata can then be combined into a single
DFAO computing $\rho_{\bf s}^{\rm ab} (n)$,
using the direct product construction
discussed in \cite[Lemma 2.4]{Shallit&Zarifi:2019}.
\end{enumerate}
\end{proof}

With the exception of the last step, all these steps can be achieved
using the free software {\tt Walnut}.  We implemented
the last step in {\tt Dyalog APL}, although it should be a feature
of a future version of {\tt Walnut}.  

We now illustrate
the ideas in detail on a particular infinite word, the Tribonacci word.
In doing so, we recover almost all of the results from two papers:
\cite{Richomme&Saari&Zamboni:2010} and \cite{Turek:2015}. In particular,
we avoid the long, {\it ad hoc}, and case-based reasoning of the first paper;
it is replaced by calculations done ``automatically'' by {\tt Walnut}.
The second paper, by Turek, is much closer to our approach in spirit
and details, but still has some {\it ad hoc} aspects.
In an earlier paper on Tribonacci-automatic sequences
\cite{Mousavi&Shallit:2015}, we stated that we could not yet obtain
the abelian complexity of the Tribonacci word, a deficiency we remedy here.

\section{The Tribonacci word and Tribonacci representations}

The infinite Tribonacci word ${\bf TR} = {\bf TR}[0..\infty) =
0102010\cdots$ is the fixed point of the map
$0 \rightarrow 01$, $1 \rightarrow 02$, $2 \rightarrow 0$.
It was studied, for example, in 
\cite{Chekhova&Hubert&Messaoudi:2001,Barcucci&Belanger&Brlek:2004,Rosema&Tijdeman:2005}.

The word $\bf TR$ is Tribonacci-automatic.  This means there is 
a DFAO that takes the Tribonacci representation of $n$
as input, and computes ${\bf TR}[n]$, the $n$'th term
of the Tribonacci sequence.   (Indexing starts with $0$.)  Here the Tribonacci
representation of a natural number $n$ is a binary word
$w = e_1 e_2 \cdots e_r$ such that
$n = \sum_{1 \leq i \leq r} T_{r+2-i}$,
where $(T_i)_{i \geq 0}$ is the Tribonacci sequence,
defined by $T_0 = 0$, $T_1 = 1$, $T_2 = 1$, and
$T_n = T_{n-1} + T_{n-2} + T_{n-3}$ for $n \geq 3$.
The Tribonacci representation for $n$ is unique,
provided the word
$w$ begins with $1$ (for $n \geq 1$) and contains no block of the form $111$;
see \cite{Carlitz&Scoville&Hoggatt:1972}.  We write this
unique representation as $(n)_T$.  The inverse map
$[w]_T$ maps a binary word to the sum
$\sum_{1 \leq i \leq r} T_{r+2-i}$.

More specifically, we have that 
$$ {\bf TR}[n] =
\begin{cases}
0, & \text{if $(n)_T$ ends in $0$}; \\
1, & \text{if $(n)_T$ ends in $01$}; \\
2, & \text{if $(n)_T$ ends in $11$}.
\end{cases} $$

To see that the Parikh vector of length-$n$ prefixes of
$\bf TR$ is synchronized, it suffices to show that
each of the three functions $n \rightarrow |{\bf TR}[0..n-1]|_a$
is synchronized, for $a\in \{ 0,1,2 \}$.  This is easily
seen using the following relations.  Write
$(n)_T = e_1 e_2 \cdots e_r$.  Then
\begin{align}
|{\bf TR}[0..n-1]|_0 &= [e_1\cdots e_{r-1}]_T + e_r \nonumber \\
|{\bf TR}[0..n-1]|_1 &= [e_1\cdots e_{r-2}]_T + e_{r-1} \label{sync} \\
|{\bf TR}[0..n-1]|_2 &= [e_1\cdots e_{r-3}]_T + e_{r-2} \nonumber
\end{align}
See, for example, \cite[Thm.~20]{Mousavi&Shallit:2015} and
\cite[Thm.~13]{Dekking&Shallit&Sloane:2020}.

\section{{\tt Walnut} implementation of the algorithm for {\bf TR}}

In this section we show how to implement our ideas using the
word {\bf TR} as an example.   The code is given in {\tt Walnut},
a theorem-prover for first-order logical formulas on automatic
sequences \cite{Mousavi:2016}.

Let us start by implementing the formulas \eqref{sync} given above.

In order to do this, we need to be able to tell whether a number
is a right-shift of another (in Tribonacci representation).
For example, $6 = [110]_T$ is the right shift of $12 = [1101]_T$.
It is easy to create a 2-D Tribonacci automaton for this, and it is
illustrated below in Figure~\ref{one}.  It takes the Tribonacci
representations of $m$ and $n$ in parallel, and accepts
if $(n)_T$ is the right shift of $(m)_T$.  
We call the result
{\tt \$rst} and it is stored in the {\tt Result} directory of
{\tt Walnut}.
\begin{figure}[H]
\begin{center}
\vskip -.3in
\includegraphics[width=4.5in]{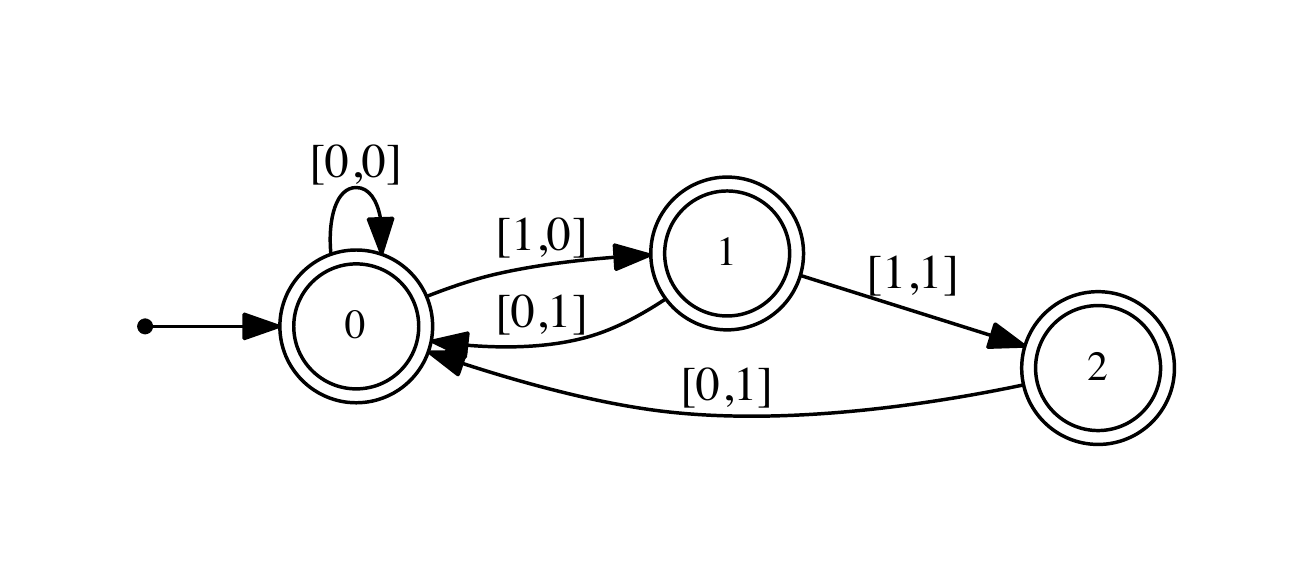}
\end{center}
\vskip -.5in
\caption{Tribonacci shift is synchronized.}
\label{one}
\end{figure}
We also need to be able to compute the last bit of $n$ in Tribonacci
representation.   
Again, this is very easy, and an automaton 
computing it is illustrated below in Figure~\ref{two}.  We
call the result {\tt TRL}, and the appropriate file is stored
under the name {\tt TRL.txt} in the {\tt Word Automata Library} of
{\tt Walnut}.
\begin{figure}[H]
\begin{center}
\vskip -.3in
\includegraphics[width=4.5in]{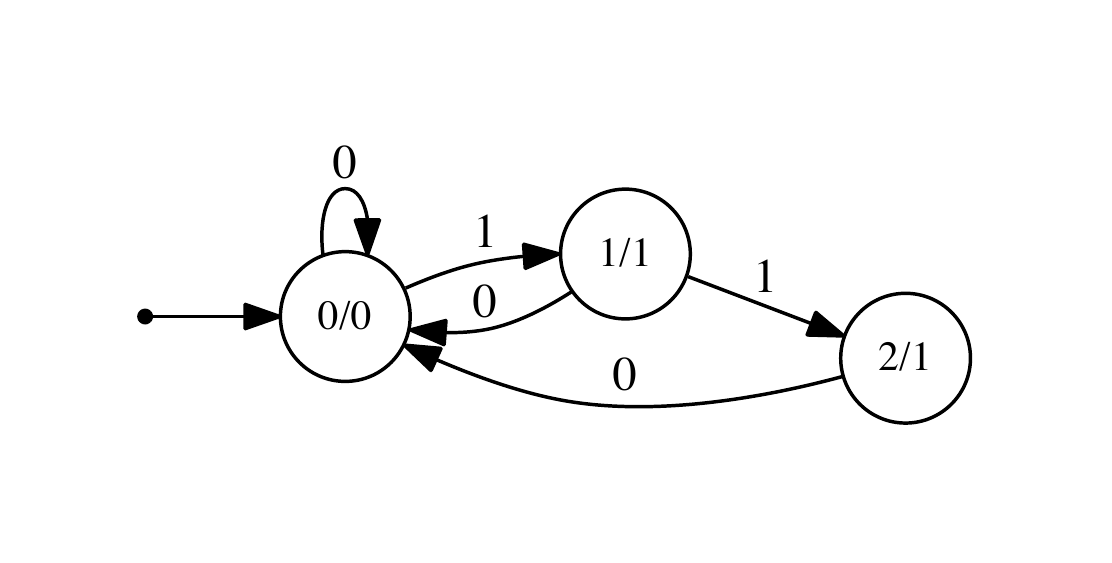}
\end{center}
\vskip -.5in
\caption{DFAO for the last bit.}
\label{two}
\end{figure}

Now, using this, we can find DFAO's computing the maps
$n \rightarrow |{\bf TR}[0..n-1]|_a$ 
for $a \in \{ 0, 1, 2\}$.  Here is the {\tt Walnut} code:
\begin{verbatim}
def tribsync0 "?msd_trib Ea Eb (s=a+b) & ((TRL[n]=@0)=>b=0) &
     ((TRL[n]=@1)=>b=1) & $rst(n,a)":
def tribsync1 "?msd_trib Ea Eb Ec (s=b+c) & ((TRL[a]=@0)=>c=0) &
     ((TRL[a]=@1)=>c=1) & $rst(n,a) & $rst(a,b)":
def tribsync2 "?msd_trib Ea Eb Ec Ed (s=c+d) & ((TRL[b]=@0)=>d=0) &
     ((TRL[b]=@1)=>d=1) & $rst(n,a) & $rst(a,b) & $rst(b,c)":
\end{verbatim}

This gives three synchronized automata computing
$|{\bf TR}[0..n-1]|_a$ for $a \in \{ 0,1,2 \}$, as follows:
\begin{figure}[H]
\begin{center}
\vskip -.3in
\includegraphics[height=3in]{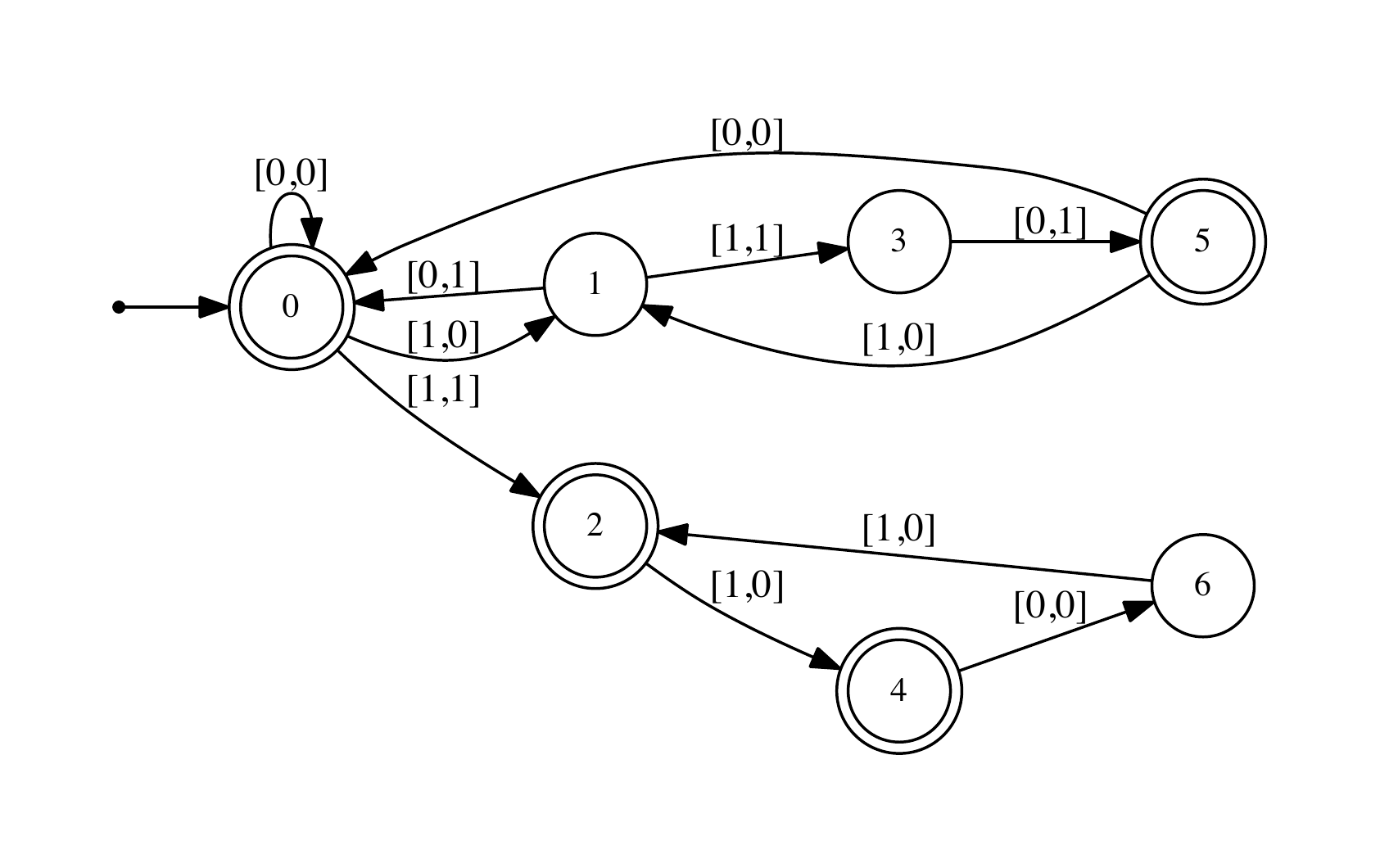}
\end{center}
\vskip -.5in
\caption{Synchronized automaton for $|{\bf TR}[0..n-1]|_0$.}
\end{figure}
\begin{figure}[H]
\begin{center}
\vskip -.3in
\includegraphics[height=3.5in]{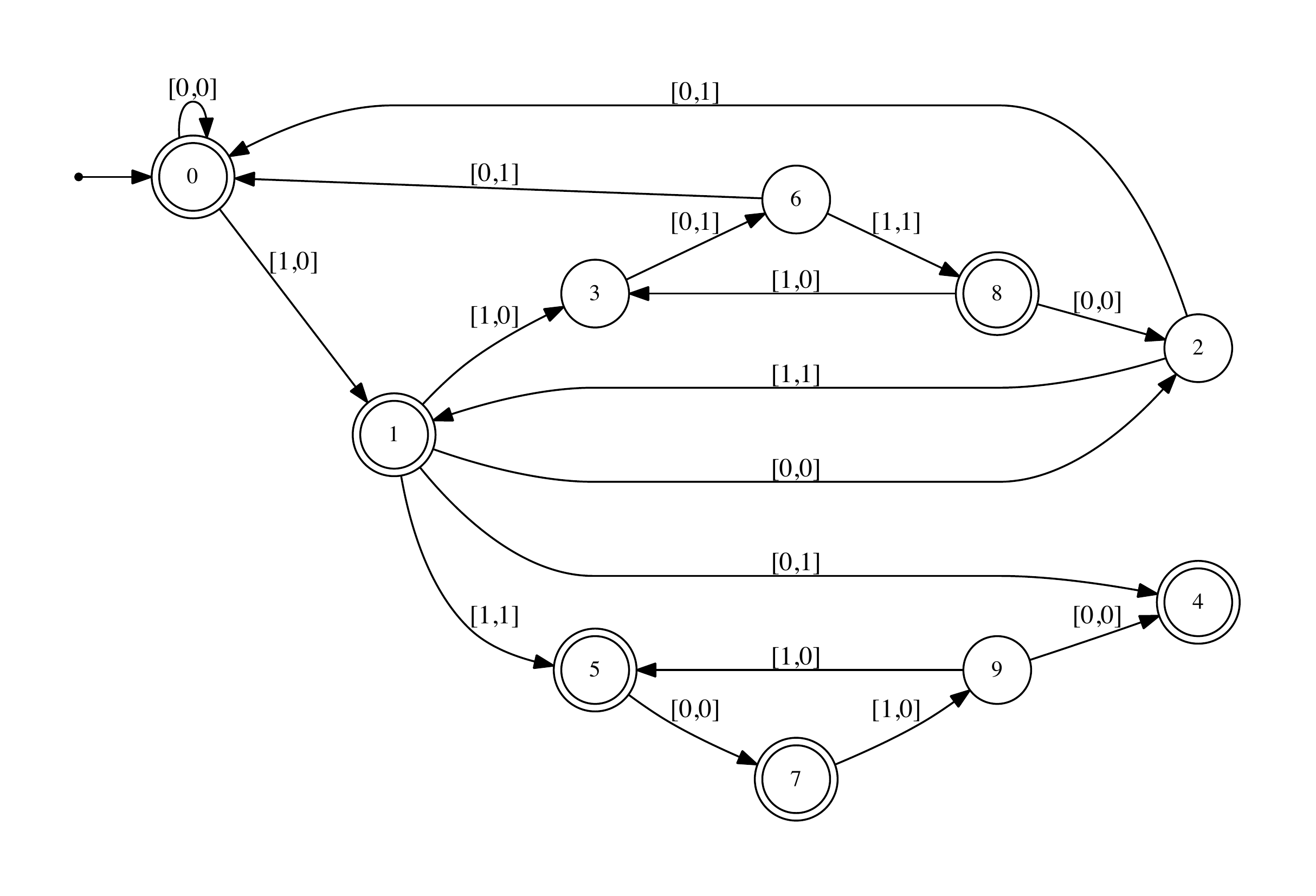}
\end{center}
\vskip -.5in
\caption{Synchronized automaton for $|{\bf TR}[0..n-1]|_1$.}
\end{figure}
\begin{figure}[H]
\begin{center}
\vskip -.3in
\includegraphics[height=4in]{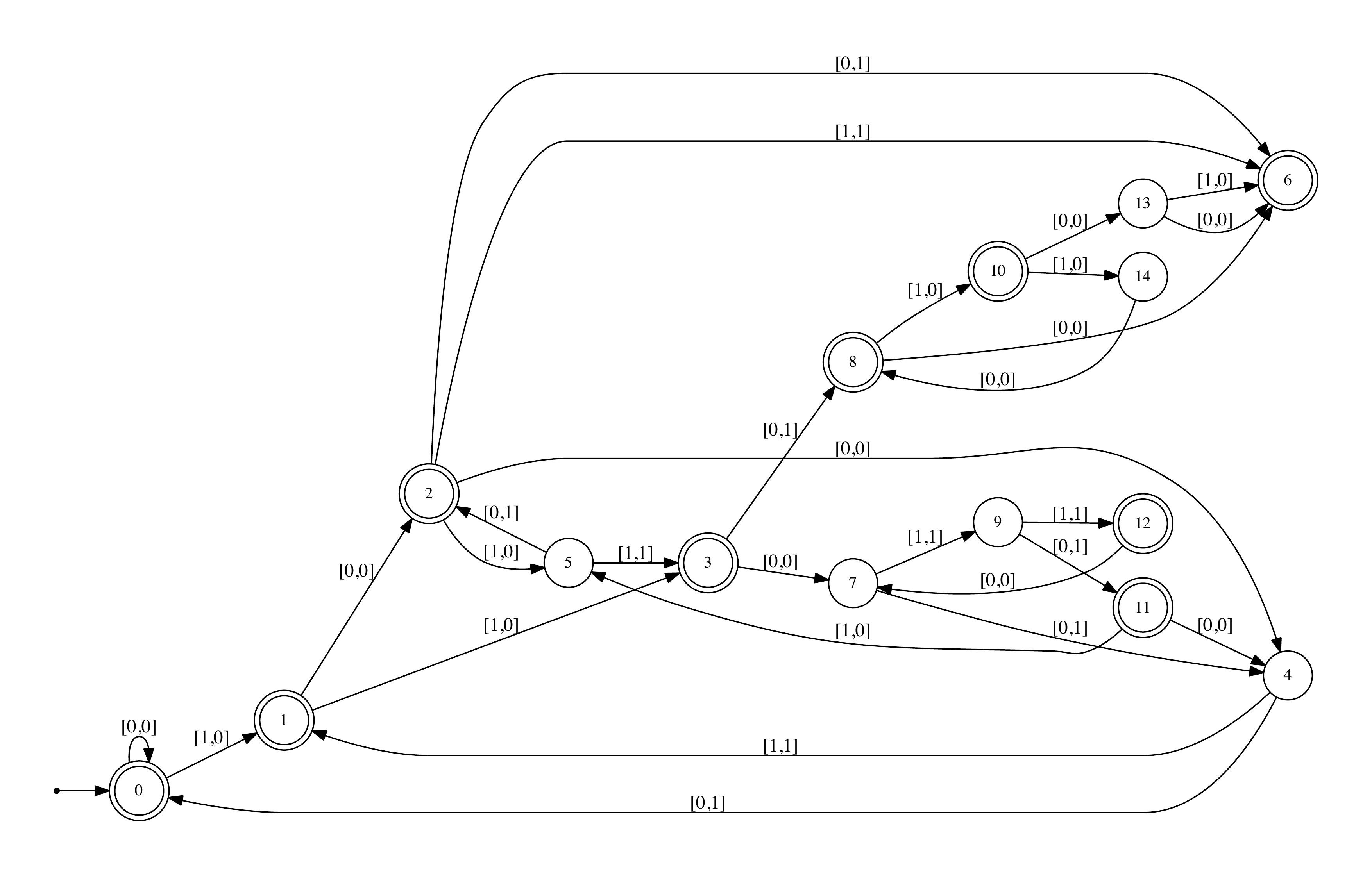}
\end{center}
\vskip -.5in
\caption{Synchronized automaton for $|{\bf TR}[0..n-1]|_2$.}
\end{figure}

(We could, if we wished, combine this into one synchronized automaton
computing all three
elements of the Parikh vector of ${\bf TR}[0..n-1]$, but the resulting
automaton has $31$ states and is a little awkward to display.)

Next we compute synchronized
Tribonacci automata computing $|{\bf TR}[i..i+n-1]|_a$ 
for $a \in \{ 0,1,2 \}$.  
We can do this with the following first-order formulas
computing 
$$|{\bf TR}[0..i+n-1]|_a - |{\bf TR}[0..i-1]|_a,$$
namely
{\small
\begin{verbatim}
def tribfac0 "?msd_trib Aq Ar ($tribsync0(i+n,q) & $tribsync0(i,r)) => (q=r+s)":
def tribfac1 "?msd_trib Aq Ar ($tribsync1(i+n,q) & $tribsync1(i,r)) => (q=r+s)":
def tribfac2 "?msd_trib Aq Ar ($tribsync2(i+n,q) & $tribsync2(i,r)) => (q=r+s)":
\end{verbatim}
}
The resulting automata have $239$, $283$, and $406$ states, respectively --- much too large to display here.

We now move on to computing the abelian complexity of $\bf TR$.  
We want to compute the number of distinct triples 
$\psi({\bf TR}[i..i+n-1])$ for $i \geq 0$.   This is the same
as the number of distinct triples
$$f(i,n) := \psi({\bf TR}[i..i+n-1]) - \psi({\bf TR}[0..n-1])$$
for $i \geq 0$.   (These were called \textit{relative Parikh vectors}
by Turek \cite{Turek:2015}.)

In order to compute this, we first need to know the range of
each coordinate.   Since this range could include negative integers,
and {\tt Walnut} is currently restricted to $\Enn$, this is slightly
awkward, but can be done as follows:
{\small
\begin{verbatim}
def posrange0 "?msd_trib E i,n,s,t $tribfac0(i,n,s) & $tribfac0(0,n,t) & u+t=s":
def negrange0 "?msd_trib E i,n,s,t $tribfac0(i,n,s) & $tribfac0(0,n,t) & u+s=t":
def posrange1 "?msd_trib E i,n,s,t $tribfac1(i,n,s) & $tribfac1(0,n,t) & u+t=s":
def negrange1 "?msd_trib E i,n,s,t $tribfac1(i,n,s) & $tribfac1(0,n,t) & u+s=t":
def posrange2 "?msd_trib E i,n,s,t $tribfac2(i,n,s) & $tribfac2(0,n,t) & u+t=s":
def negrange2 "?msd_trib E i,n,s,t $tribfac2(i,n,s) & $tribfac2(0,n,t) & u+s=t":
\end{verbatim}
}
By inspecting the resulting six automata, we see that
\begin{equation}
f(i,n) \in \{ -1, 0, 1 \} \times \{ -1,0, 1, 2\} \times
	\{ -1,0,1, 2 \},
\label{range}
\end{equation}
which proves that $\bf TR$ has bounded abelian complexity and
also proves that $\bf TR$ is 2-balanced
\cite[Thm.~1.3]{Richomme&Saari&Zamboni:2010}.

We now proceed to determine which specific triples can appear in the range 
of $f(i,n)$.  Since we know the range from \eqref{range},
we can do this as follows:
\begin{verbatim}
def validtriples "?msd_trib Ei,n,a,b,c,d,e,f $tribfac0(i,n,a) &
   $tribfac0(0,n,b) & s+b=a+1 & $tribfac1(i,n,c) & $tribfac1(0,n,d)
   & t+d=c+1 & $tribfac2(i,n,e) & $tribfac2(0,n,f) & u+f=e+1":
\end{verbatim}
This returns the range of $f(i,n)+(1,1,1)$, which from \eqref{range} is guaranteed
to be in $\Enn^3$.  The resulting automaton is as follows:
\begin{figure}[H]
\begin{center}
\vskip -.3in
\includegraphics[width=4.5in]{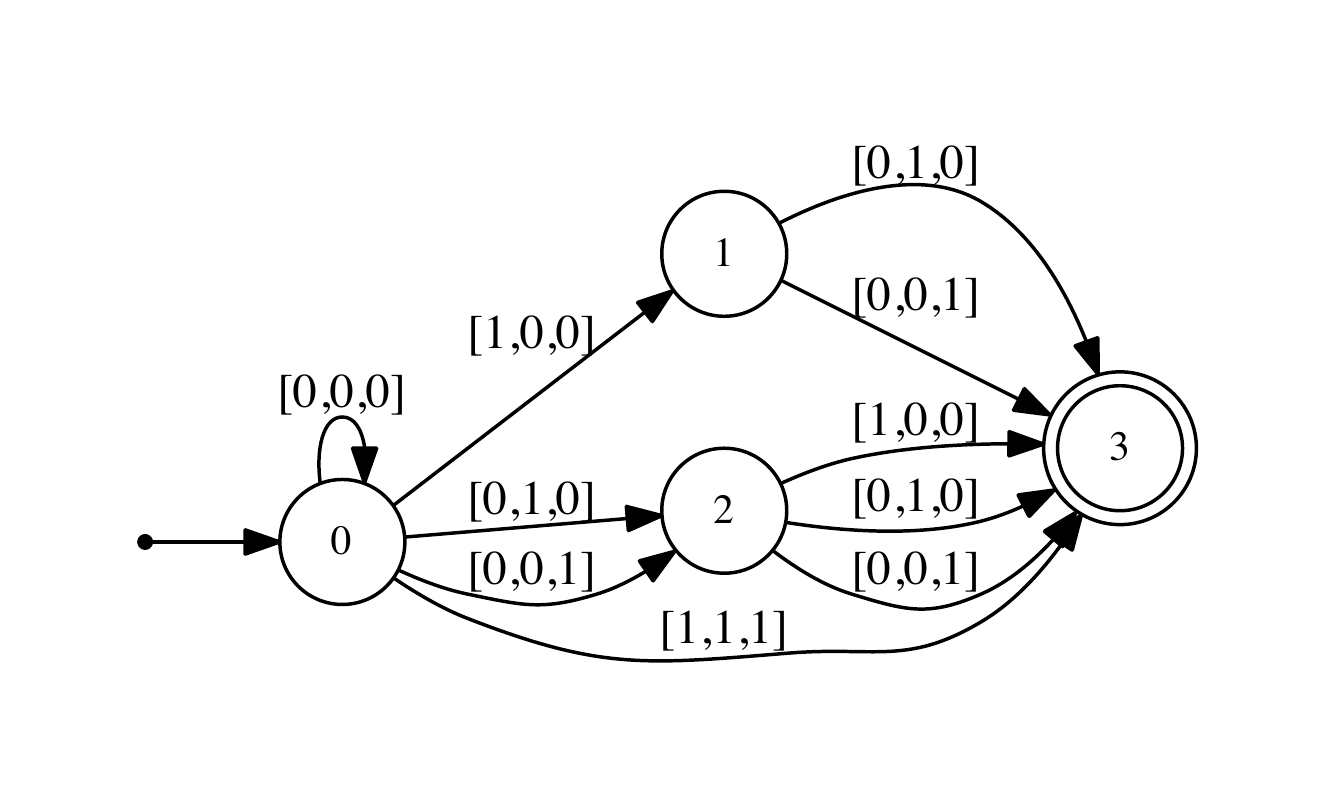}
\end{center}
\vskip -.5in
\caption{Range of $f(i,n) + (1,1,1)$.}
\label{validt}
\end{figure}
As you can see by inspecting the automaton in Figure~\ref{validt}, the
automaton accepts the Tribonacci representation of exactly nine
triples, namely
$$\{ (1,1,1), (2,1,0), (2,0,1), (1,2,0), (0,3,0), (0,2,1), (1,0,2), (0,1,2), (0,0,3) \}.$$
Thus we have proven that the range of $f(i,n)$ is $A$, where
\begin{align*}
A &:=  \{ (0,0,0), (1,0,-1), (1,-1,0), (0,1,-1), (-1,2,-1), \\
& \quad\quad (-1,1,0), (0,-1,1), (-1,0,1), (-1,-1,2)  \}.
\end{align*}

Define $A_n = \{ f(i,n) \suchthat i \geq 0 \}$.
It now remains to see which of the $2^9$ possible subsets of $A$ can
be an $A_n$ for some $n$.
(Actually, since
$(0,0,0)$ occurs for each $n$, a priori there are only $2^8$ possible subsets.)
We could do this by blindly following the recipe in
Section~\ref{steps}, but since most subsets are not reachable, we
can use a faster way.

First let's make {\tt Walnut} formulas for 
the assertion that $f(i,n)$ equals each of the nine tuples listed
\begin{verbatim}
def t000 "?msd_trib Ea,b,c,d,e,f $tribfac0(i,n,a) &
   $tribfac0(0,n,b) & a=b & $tribfac1(i,n,c) & $tribfac1(0,n,d)
   & c=d & $tribfac2(i,n,e) & $tribfac2(0,n,f) & e=f":
def t10m1 "?msd_trib Ea,b,c,d,e,f $tribfac0(i,n,a) &
   $tribfac0(0,n,b) & a=b+1 & $tribfac1(i,n,c) & $tribfac1(0,n,d)
   & c=d & $tribfac2(i,n,e) & $tribfac2(0,n,f) & e+1=f":
def t1m10 "?msd_trib Ea,b,c,d,e,f $tribfac0(i,n,a) &
   $tribfac0(0,n,b) & a=b+1 & $tribfac1(i,n,c) & $tribfac1(0,n,d)
   & c+1=d & $tribfac2(i,n,e) & $tribfac2(0,n,f) & e=f":
def t01m1 "?msd_trib Ea,b,c,d,e,f $tribfac0(i,n,a) &
   $tribfac0(0,n,b) & a=b & $tribfac1(i,n,c) & $tribfac1(0,n,d)
   & c=d+1 & $tribfac2(i,n,e) & $tribfac2(0,n,f) & e+1=f":
def tm12m1 "?msd_trib Ea,b,c,d,e,f $tribfac0(i,n,a) &
   $tribfac0(0,n,b) & a+1=b & $tribfac1(i,n,c) & $tribfac1(0,n,d)
   & c=d+2 & $tribfac2(i,n,e) & $tribfac2(0,n,f) & e+1=f":
def tm110 "?msd_trib Ea,b,c,d,e,f $tribfac0(i,n,a) &
   $tribfac0(0,n,b) & a+1=b & $tribfac1(i,n,c) & $tribfac1(0,n,d)
   & c=d+1 & $tribfac2(i,n,e) & $tribfac2(0,n,f) & e=f":
def t0m11 "?msd_trib Ea,b,c,d,e,f $tribfac0(i,n,a) &
   $tribfac0(0,n,b) & a=b & $tribfac1(i,n,c) & $tribfac1(0,n,d)
   & c+1=d & $tribfac2(i,n,e) & $tribfac2(0,n,f) & e=f+1":
def tm101 "?msd_trib Ea,b,c,d,e,f $tribfac0(i,n,a) &
   $tribfac0(0,n,b) & a+1=b & $tribfac1(i,n,c) & $tribfac1(0,n,d)
   & c=d & $tribfac2(i,n,e) & $tribfac2(0,n,f) & e=f+1":
def tm1m12 "?msd_trib Ea,b,c,d,e,f $tribfac0(i,n,a) &
   $tribfac0(0,n,b) & a+1=b & $tribfac1(i,n,c) & $tribfac1(0,n,d)
   & c+1=d & $tribfac2(i,n,e) & $tribfac2(0,n,f) & e=f+2":
\end{verbatim}
Interestingly enough, each of these automata has 101 states.

Next, we create a {\tt Walnut} predicate that takes two arguments, $m$ and
$n$, and returns true if $A_m = A_n$.
{\footnotesize
\begin{verbatim}
def subseteq "?msd_trib ((Ei $t000(i,m)) <=> (Ej $t000(j,n))) 
& ((Ei $t10m1(i,m)) <=> (Ej $t10m1(j,n))) & ((Ei $t1m10(i,m)) <=> (Ej $t1m10(j,n)))
& ((Ei $t01m1(i,m)) <=> (Ej $t01m1(j,n))) & ((Ei $tm12m1(i,m)) <=> (Ej $tm12m1(j,n))) 
& ((Ei $tm110(i,m)) <=> (Ej $tm110(j,n))) & ((Ei $t0m11(i,m)) <=> (Ej $t0m11(j,n))) 
& ((Ei $tm101(i,m)) <=> (Ej $tm101(j,n))) & ((Ei $tm1m12(i,m)) <=> (Ej $tm1m12(j,n)))":
\end{verbatim}
}
The resulting Tribonacci automaton has 5251 states.

Now we will iteratively determine the possible subsets of $A$
that occur as some $A_n$.
We'll do this iteratively as follows:
Starting with the subset $A_0 = \{ (0,0,0) \}$
we will find the least $n$ for which 
a different subset occurs than the ones we found previously.
Then we find the subset corresponding to this particular $n$.
This gives us all possible subsets occurring as an $A_n$ and the
smallest $n$ for which this subset occurs:
\begin{align}
A_{0} &= \{(0,0,0)\} \nonumber\\
A_{1} &= \{(-1,0,1),(-1,1,0),(0,0,0)\}\nonumber\\
A_{2} &= \{(0,-1,1),(0,0,0),(1,-1,0)\}\nonumber\\
A_{3} &= \{(-1,0,1),(-1,1,0),(0,-1,1),(0,0,0)\}\nonumber\\
A_{4} &= \{(0,0,0),(0,1,-1),(1,0,-1)\}\nonumber\\
A_{5} &= \{(-1,0,1),(-1,1,0),(0,0,0),(0,1,-1)\}\nonumber\\
A_{6} &= \{(0,-1,1),(0,0,0),(1,-1,0),(1,0,-1)\}\nonumber\\
A_{9} &= \{(-1,0,1),(0,-1,1),(0,0,0),(1,-1,0)\}\nonumber\\
A_{11} &= \{(-1,1,0),(0,0,0),(0,1,-1),(1,0,-1)\}\nonumber\\
A_{17} &= \{(0,0,0),(0,1,-1),(1,-1,0),(1,0,-1)\}\nonumber\\
A_{30} &= \{(0,-1,1),(0,0,0),(0,1,-1),(1,-1,0),(1,0,-1)\}\nonumber\\
A_{31} &= \{(-1,0,1),(-1,1,0),(0,-1,1),(0,0,0),(0,1,-1)\}\nonumber\\
A_{55} &= \{(-1,0,1),(-1,1,0),(0,0,0),(0,1,-1),(1,0,-1)\}\nonumber\\
A_{57} &= \{(-1,0,1),(0,-1,1),(0,0,0),(1,-1,0),(1,0,-1)\}\nonumber\\
A_{101} &= \{(-1,0,1),(-1,1,0),(0,-1,1),(0,0,0),(1,-1,0)\} \label{sublist}\\
A_{105} &= \{(-1,1,0),(0,0,0),(0,1,-1),(1,-1,0),(1,0,-1)\}\nonumber\\
A_{185} &= \{(-1,0,1),(-1,1,0),(-1,2,-1),(0,0,0),(0,1,-1)\}\nonumber\\
A_{340} &= \{(-1,-1,2),(-1,0,1),(0,-1,1),(0,0,0),(1,-1,0)\}\nonumber\\
A_{341} &= \{(-1,-1,2),(-1,0,1),(-1,1,0),(0,-1,1),(0,0,0)\}\nonumber\\
A_{342} &= \{(-1,0,1),(-1,1,0),(0,-1,1),(0,0,0),(0,1,-1),(1,0,-1)\}\nonumber\\
A_{355} &= \{(-1,0,1),(0,-1,1),(0,0,0),(0,1,-1),(1,-1,0),(1,0,-1)\}\nonumber\\
A_{629} &= \{(-1,0,1),(-1,1,0),(0,-1,1),(0,0,0),(1,-1,0),(1,0,-1)\}\nonumber\\
A_{653} &= \{(-1,0,1),(-1,1,0),(0,0,0),(0,1,-1),(1,-1,0),(1,0,-1)\}\nonumber\\
A_{1157} &= \{(-1,1,0),(0,-1,1),(0,0,0),(0,1,-1),(1,-1,0),(1,0,-1)\}\nonumber\\
A_{1201} &= \{(-1,0,1),(-1,1,0),(0,-1,1),(0,0,0),(0,1,-1),(1,-1,0)\}\nonumber\\
A_{3914} &= \{(-1,0,1),(-1,1,0),(0,-1,1),(0,0,0),(0,1,-1),(1,-1,0),(1,0,-1)\}\nonumber
\end{align}

For example, having computed $A_0, \ldots , A_{1201}$, to find the next
and final term we 
can use the following {\tt Walnut} code:
\begin{verbatim}
def last "?msd_trib ~($subset(n,0) | $subset(n,1) | $subset(n,2) | $subset(n,3)
| $subset(n,4) | $subset(n,5) | $subset(n,6) | $subset(n,9) | $subset(n,11)
| $subset(n,17) | $subset(n,30) | $subset(n,31) | $subset(n,55) | $subset(n,57)
| $subset(n,101) | $subset(n,105) | $subset(n,185) | $subset(n,340) 
| $subset(n,341) | $subset(n,342) | $subset(n,355) | $subset(n,629)
| $subset(n,653) | $subset(n,1157) | $subset(n,1201))":
def missing "?msd_trib $last(n) & Am (m<n) => ~$last(m)":
\end{verbatim}
Then the Tribonacci automaton computed by {\tt Walnut} and
stored in {\tt missing.txt}, depicted
below in Figure~\ref{six}, accepts
exactly one word, which is $10011000000000$, representing $3914$ in
Tribonacci representation.  So we know the next term is $A_{3914}$.
\begin{figure}[H]
\begin{center}
\vskip -.3in
\includegraphics[width=6.5in]{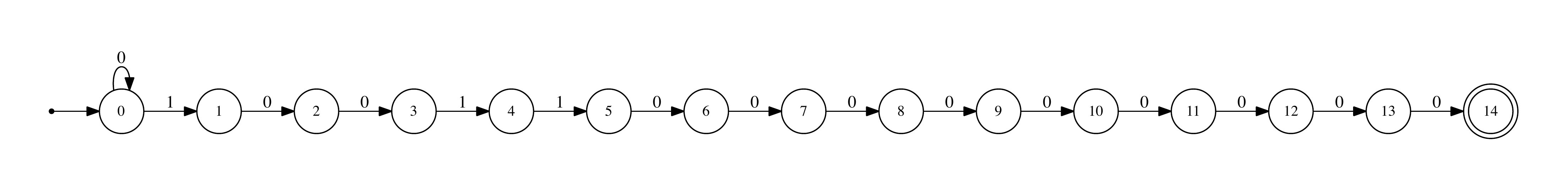}
\end{center}
\vskip -.5in
\caption{The next term is $3914$.}
\label{six}
\end{figure}
By inspecting the resulting subsets, we obtain Theorem 1.4 from
\cite{Richomme&Saari&Zamboni:2010}:   the abelian complexity
function $\rho_{\bf TR}^{\rm ab}(n)$ takes on only the values
$\{3,4,5,6,7\}$ for $n \geq 1$.

Finally, we can use the direct product construction mentioned above in 
Section~\ref{steps} to obtain
two DFAO's.   The first, corresponding to the output function $\tau_1$
computes, for each $n$, the minimal $n'$
such that $A_{n} = A_{n'}$; it is called {\tt TRAS.txt}. The second, corresponding to the
output function $\tau_2$ computes, for each $n$,
the value of $\rho_{\bf TR}^{\rm ab} (n)$; it is called {\tt TRAC.txt}.  
The transition functions
are given in Table~\ref{tab1}.   

In addition to proving Theorem 10 from \cite{Turek:2015}, we have
now also proved:
\begin{theorem}
The set-valued function $n \rightarrow \{ \psi({\bf TR}[i..i+n-1]) \suchthat
i \geq 0 \}$ is Tribonacci-automatic.
It is given by $A_n + \psi({\bf TR}[0..n-1])$.
\end{theorem}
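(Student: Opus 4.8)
The plan is to assemble the statement from the objects already constructed, since the equality is purely formal and the automaticity is merely a repackaging of the predicates built above. First I would dispose of the claimed identity $\{\psi({\bf TR}[i..i+n-1]) \suchthat i \geq 0\} = A_n + \psi({\bf TR}[0..n-1])$. By the definition $f(i,n) = \psi({\bf TR}[i..i+n-1]) - \psi({\bf TR}[0..n-1])$, every factor Parikh vector equals $f(i,n) + \psi({\bf TR}[0..n-1])$, so as $i$ ranges over $\Enn$ the family of these vectors is exactly $\{f(i,n)\suchthat i\geq 0\}$ translated by the fixed vector $\psi({\bf TR}[0..n-1])$; that is, $A_n + \psi({\bf TR}[0..n-1])$. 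No quantifier manipulation is needed for this half.

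For the automaticity, the key point is that the set-valued function splits into a bounded automatic part and a synchronized part. Since $A_n \subseteq A$ and $A$ consists of the nine tuples determined above, the map $n \mapsto A_n$ takes only finitely many values; the enumeration \eqref{sublist} in fact exhibits the $26$ subsets that actually occur. I would realize this map as a genuine Tribonacci-automatic sequence over a finite output alphabet: each of the nine predicates {\tt t000}, $\ldots$, {\tt tm1m12} defines the set of $n$ for which a fixed element of $A$ lies in $A_n$, and the indicator of which tuples lie in $A_n$ is computed by the direct-product DFAO {\tt TRAS}. Thus $n \mapsto A_n$ is Tribonacci-automatic.

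It remains to combine this with the translation. The map $n \mapsto \psi({\bf TR}[0..n-1])$ is synchronized: this is exactly what {\tt tribsync0}, {\tt tribsync1}, {\tt tribsync2}, built from the relations \eqref{sync}, provide for the three coordinates. I would then write the defining relation of the set-valued function as the single first-order formula asserting that a triple $(x,y,z)$ belongs to the value at $n$ iff
\[
\exists i \;\bigl( \texttt{tribfac0}(i,n,x) \wedge \texttt{tribfac1}(i,n,y) \wedge \texttt{tribfac2}(i,n,z) \bigr),
\]
which is first-order in the Tribonacci structure with addition and indexing into ${\bf TR}$. By the Charlier--Rampersad--Shallit automaticity theorem this formula is recognized by a Tribonacci automaton, yielding an automaton for the relation $\{(n,x,y,z)\}$; equivalently, the same relation is obtained by adding the synchronized vector $\psi({\bf TR}[0..n-1])$ to each element named by the automatic profile $A_n$.

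The step I expect to be the main obstacle is not a calculation but a definitional one: clarifying what ``Tribonacci-automatic'' should mean for a set-valued function whose entries grow without bound. A finite-output DFAO cannot emit the vectors themselves, so the honest reading is that the relation $\{(n,x,y,z) \suchthat (x,y,z) \in A_n + \psi({\bf TR}[0..n-1])\}$ is Tribonacci-recognizable, equivalently that the function is the sum of a bounded automatic part ($A_n$) and a synchronized part ($\psi({\bf TR}[0..n-1])$). Once this notion is fixed and one checks that the existential quantifier over $i$ is eliminated by the logic-to-automaton translation, the proof reduces to a direct appeal to the machinery already in place.
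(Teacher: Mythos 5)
Your proposal is correct and follows essentially the same route as the paper, which states this theorem as a direct consequence of the preceding construction: the identity is immediate from the definitions of $f(i,n)$ and $A_n$, the map $n \mapsto A_n$ is computed by the direct-product DFAO ({\tt TRAS}) built from the nine membership predicates, and the translate $\psi({\bf TR}[0..n-1])$ is handled by the synchronized automata. Your explicit attention to what ``Tribonacci-automatic'' means for an unbounded set-valued function (bounded automatic part plus synchronized part) is a useful clarification of exactly the reading the paper intends.
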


The reason why the second automaton given in Table~\ref{tab1}
differs from that of
Turek \cite{Turek:2015} is that our automaton detects illegal
Tribonacci representations (in state 7) and produces an output
$-1$ for those representations, while Turek's automaton does not.
We have checked the output of Turek's automaton against ours
for the first 1,000,000 terms and they agree in all cases.
This serves as a double-check on our results.
Once we have the automaton, additional basic results such as
Turek's Corollary 7.2 of \cite{Turek:2013}
(asserting that
$\rho_{\bf TR}^{\rm ab} (n) = 4$ for infinitely many $n$)
can be easily obtained with {\tt Walnut}:
\begin{verbatim}
eval test4 "?msd_trib An Em (m>n) & TRAC[m]=@4":
\end{verbatim}
which evaluates to {\tt true}.    In fact, the analogous result
holds for each of the abelian complexities $3,4,5,6,7$.

Analogously, using {\tt TRAS},
we can also obtain the following new result:
\begin{theorem}
Each of the subsets in \eqref{sublist}, except $A_0$, occurs infinitely
often as the value of an $A_n$.
\end{theorem}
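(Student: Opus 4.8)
The plan is to reuse the DFAO {\tt TRAS} that was already constructed, which on input $n$ outputs the least index $n'$ with $A_{n'} = A_n$. By the way {\tt TRAS} is defined, for each of the $26$ minimal representatives $k$ appearing in \eqref{sublist}, the equality ${\tt TRAS}[m] = k$ holds exactly when $A_m = A_k$. Consequently, the statement that the subset $A_k$ occurs infinitely often is equivalent to the first-order assertion that for every $n$ there exists $m > n$ with ${\tt TRAS}[m] = k$, and this is exactly the kind of ``infinitely often'' statement that {\tt Walnut} can decide. This is the direct analogue of the earlier {\tt test4} computation for {\tt TRAC}, now carried out against {\tt TRAS} instead.

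Concretely, for each of the $25$ values
$$k \in \{1,2,3,4,5,6,9,11,17,30,31,55,57,101,105,185,340,341,342,355,629,653,1157,1201,3914\},$$
that is, all representatives of \eqref{sublist} other than $A_0$, I would substitute the concrete numeral for $k$ and run
\begin{verbatim}
eval testk "?msd_trib An Em (m>n) & TRAS[m]=@k":
\end{verbatim}
The theorem is established once each of these $25$ evaluations returns {\tt true}. Since both {\tt TRAS} and the ``infinitely often'' idiom are already in hand, the work is entirely mechanical; one could equally well fold all $25$ checks into a single disjunctive formula, but running them one at a time is more transparent and pinpoints any that might fail.

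The one case requiring separate comment is $A_0 = \{(0,0,0)\}$, which is rightly excluded from the statement: for $n \geq 1$ the abelian complexity of $\bf TR$ is at least $3$, so $\lvert A_n\rvert \geq 3$, and hence $A_n = A_0$ can hold only at $n = 0$. This fact is itself checkable in {\tt Walnut} (the formula \texttt{?msd\_trib Em (m>0) \& TRAS[m]=@0} returns {\tt false}), confirming that $A_0$ occurs exactly once. I do not anticipate any genuine mathematical obstacle; the only points demanding care are implementation details rather than content: namely (i) ensuring that the output alphabet of {\tt TRAS} uses the minimal representatives $k$ themselves as output symbols, so that each {\tt @k} names the intended subset, and (ii) accounting for the dead state that emits the spurious output for illegal Tribonacci representations (the counterpart of state $7$ in {\tt TRAC}), whose output simply fails to match any legitimate $k$ and therefore cannot corrupt the verification.
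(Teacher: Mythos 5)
Your proposal matches the paper's own argument: the paper proves this theorem exactly by applying the ``infinitely often'' {\tt Walnut} idiom used for {\tt test4} (i.e., \verb|An Em (m>n) & TRAS[m]=@k|) to the DFAO {\tt TRAS}, whose outputs are the minimal representatives of each subset in \eqref{sublist}, just as you describe. Your additional remarks on excluding $A_0$ and on the dead state for illegal representations are correct implementation details consistent with the paper's construction.
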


\begin{table}[H]
\centering
\resizebox{\columnwidth}{!}{%
\begin{tabular}{ccccc|ccccc|ccccc}
 $q$ & $\delta(q,0)$ & $\delta(q,1)$ & $\tau_1(q)$ & $\tau_2(q)$ &
 $q$ & $\delta(q,0)$ & $\delta(q,1)$ & $\tau_1(q)$ & $\tau_2(q)$ &
 $q$ & $\delta(q,0)$ & $\delta(q,1)$ & $\tau_1(q)$ & $\tau_2(q)$ \\
 \hline
  0& 0& 1& 0& 1&  26& 31& 19& 55&  5&   52&  57&  24& 629&   6\\
  1& 2& 3& 1& 3&  27& 29& 32&  5&  4&   53&  58&   7&   3&   4\\
  2& 4& 5& 2& 3&  28& 33&  5& 57&  5&   54&  59&  39& 653&   6\\
  3& 6& 7& 3& 4&  29& 16& 20&  9&  4&   55&  60&  61&   5&   4\\
  4& 8& 9& 4& 3&  30& 34& 35&  6&  4&   56&  41&  62& 185&   5\\
  5&10&11& 5& 4&  31& 36& 24&101&  5&   57&  63&  46&1157&   6\\
  6&12& 5& 6& 4&  32& 37&  7&  3&  4&   58&  64&  65&  30&   5\\
  7& 7& 7& -1&-1&  33& 38& 39&105&  5&   59&  66&  46&1201&   6\\
  8& 6&13& 3& 4&  34& 18& 40& 11&  4&   60&  49&  67&   9&   4\\
  9&14&15& 1& 3&  35& 41& 42&185&  5&   61&  68&   7& 341&   5\\
 10&16& 5& 9& 4&  36& 43& 27& 30&  5&   62&  69&   7& 341&   5\\
 11&17& 7& 3& 4&  37& 26& 44& 30&  5&   63&  70&  46& 342&   6\\
 12&18&19&11& 4&  38& 45& 46& 31&  5&   64&  31&  40&  55&   5\\
 13& 6& 3& 3& 4&  39& 47& 15& 31&  5&   65&  71&  72& 185&   5\\
 14& 4&20& 2& 3&  40& 30& 48&  3&  4&   66&  73&  27& 355&   6\\
 15&21& 7& 3& 4&  41& 49& 50&340&  5&   67&  31&  25&  55&   5\\
 16&22&23&17& 4&  42& 51&  7&341&  5&   68&  26&  74&  30&   5\\
 17&12&20& 6& 4&  43& 52& 19&342&  6&   69&  34&  65&   6&   4\\
 18&21&24& 3& 4&  44& 29& 53&  5&  4&   70&  70&  46&3914&   7\\
 19&17&15& 3& 4&  45& 54& 27&355&  6&   71&  49&  67& 340&   5\\
 20&10&25& 5& 4&  46& 47& 32& 31&  5&   72&  58&   7& 341&   5\\
 21&26&27&30& 5&  47& 33& 20& 57&  5&   73&  70&  39&3914&   7\\
 22&28&19&31& 5&  48& 36&  7&101&  5&   74&  60&  75&   5&   4\\
 23&29&15& 5& 4&  49& 22& 55& 17&  4&   75&  76&   7& 341&   5\\
 24&28&11&31& 5&  50& 31& 11& 55&  5&   76&  26&  77&  30&   5\\
 25&30& 7& 3& 4&  51& 34& 56&  6&  4&   77&  60&  72&   5&   4
\end{tabular}
}
\caption{The two DFAO's computing $A_n$ and
$\rho_{\bf TR}^{\rm ab} (n)$.}
\label{tab1}
\end{table}

All the {\tt Walnut} code referred to in this paper is available
from\\
\centerline{\url{https://cs.uwaterloo.ca/~shallit/papers.html}}

\section{Conclusions}

We have shown that in some cases the abelian complexity function of an
automatic sequence can be ``automatically'' computed.   This continues
in the spirit of other papers (e.g., \cite{Charlier&Rampersad&Shallit:2011,Mousavi&Shallit:2015})
that try to automate results in
combinatorics on words that formerly needed extensive case analysis to prove.

\section{Acknowledgments}

We are pleased to thank 
Ond\v{r}ej Turek for his helpful comments.


\begin{thebibliography}{99}

\bibitem{Allouche&Shallit:2003}
J.-P. Allouche and J.~O. Shallit.
\newblock {\em Automatic Sequences}, Cambridge University Press, 2003.

\bibitem{Baranwal:2020}
A. R. Baranwal.
\newblock Decision algorithms for Ostrowski-automatic sequences. 
\newblock M. Math.\ thesis, University of Waterloo, School of
Computer Science, 2020.
\newblock Available at \url{https://uwspace.uwaterloo.ca/handle/10012/15845}.

\bibitem{Barcucci&Belanger&Brlek:2004}
E. Barcucci, L. B{\'e}langer, and S. Brlek.
\newblock On Tribonacci sequences.
\newblock {\it Fibonacci Quart.} {\bf 42} (2004), 314--319.

\bibitem{Carlitz&Scoville&Hoggatt:1972}
L. Carlitz, R. Scoville, and V. E. Hoggatt, Jr.
\newblock Fibonacci representations of higher order.
\newblock {\it Fibonacci Quart.} {\bf 10} (1972), 43--69, 94.

\bibitem{Carpi&Maggi:2001}
A.~Carpi and C.~Maggi.
\newblock On synchronized sequences and their separators.
\newblock {\em RAIRO Inform. Th\'eor. App.} {\bf 35} (2001), 513--524.

\bibitem{Charlier&Rampersad&Shallit:2011}
E.~Charlier, N.~Rampersad, and J.~Shallit.
\newblock Enumeration and decidable properties of automatic sequences.
\newblock In G.~Mauri and A.~Leporati, editors, {\em Developments in Language
  Theory, 15th International Conference, DLT 2011}, Vol.~6795 of 
  {\em Lect. Notes in Comp. Sci.}, pp.~165--179. Springer-Verlag, 2011.

\bibitem{Chekhova&Hubert&Messaoudi:2001}
N. Chekhova, P. Hubert, and A. Messaoudi.
Propri{\'e}t{\'e}s combinatoires, ergodiques et arithm{\'e}tiques de la substitution de Tribonacci. 
{\it J. Th{\'e}orie Nombres Bordeaux} {\bf 13} (2001), 371--394.

\bibitem{Dekking&Shallit&Sloane:2020}
F. M. Dekking, J. Shallit, and N. J. A. Sloane.
\newblock Queens in exile: non-attacking queens on infinite chess boards.
\newblock {\it Elect. J. Combinatorics} {\bf 27} (1) (2020),
\href{https://www.combinatorics.org/ojs/index.php/eljc/article/view/v27i1p52/pdf}{\#P1.52}.

\bibitem{Frougny:1986}
C. Frougny.
\newblock Fibonacci numeration systems and rational functions.
\newblock In J. Gruska, B. Rovan, and J. Wiedermann, eds.,
{\it MFCS 86}, {\it Lect. Notes in Comp. Sci.}, Vol.~233, pp.~350--359.
Springer-Verlag, 1986.

\bibitem{Goc&Schaeffer&Shallit:2013}
D.~Go\v{c}, L.~Schaeffer, and J.~Shallit.
\newblock The subword complexity of $k$-automatic sequences is
  $k$-synchronized.
\newblock In M.-P. B\'eal and O.~Carton, editors, {\em DLT 2013}, Vol.~7907 of
  {\em Lecture Notes in Computer Science}, pp.~252--263. Springer-Verlag,
  2013.

\bibitem{Mousavi:2016}
H.~Mousavi, Automatic theorem proving in {Walnut}, Preprint available at
  \url{https://arxiv.org/abs/1603.06017}. Software available at
  \url{https://github.com/hamousavi/Walnut}  (2016).

\bibitem{Mousavi&Shallit:2015}
H. Mousavi and J. Shallit.
\newblock Mechanical proofs of properties of the Tribonacci word.
\newblock In F. Manea and D. Nowotka, eds., {\it WORDS 2015},
Lect. Notes in Comp. Sci., Vol.~9304, pp.~170--190.  Springer-Verlag, 2015.

\bibitem{Richomme&Saari&Zamboni:2010}
G. Richomme, K. Saari, and L. Q. Zamboni.
\newblock Balance and Abelian complexity of the Tribonacci word.
\newblock {\it Adv. Appl. Math.} {\bf 45} (2010), 212--231.

\bibitem{Richomme&Saari&Zamboni:2011}
G. Richomme, K. Saari, and L. Q. Zamboni.
\newblock Abelian complexity of minimal subshifts.
\newblock {\it J. London Math. Soc.} (2) {\bf 83} (2011) 79--95.

\bibitem{Rosema&Tijdeman:2005}
S. W. Rosema and R. Tijdeman.
The Tribonacci substitution.
{\it INTEGERS} {\bf 5} (3) (2005), 
\href{http://www.integers-ejcnt.org/vol5-3.html}{Paper \#A13}.

\bibitem{Schaeffer:2013}
L. Schaeffer.
\newblock Deciding properties of automatic sequences.
\newblock Master's thesis, University of Waterloo, School of 
Computer Science, 2013.
Available at \url{https://cs.uwaterloo.ca/~shallit/thesisLukeSept4.pdf}.

\bibitem{Shallit&Zarifi:2019}
J. Shallit and R. Zarifi. 
\newblock Circular critical exponents for Thue-Morse factors.
\newblock {\it RAIRO Theor. Inf. Appl.} {\bf 53} (2019), 37--49.

\bibitem{Turek:2013}
O. Turek.
\newblock Abelian complexity and abelian co-decomposition.
\newblock {\it Theor. Comput. Sci.} {\bf 469} (2013), 77--91.

\bibitem{Turek:2015}
O. Turek.
\newblock Abelian complexity function of the Tribonacci word.
\newblock {\it J. Integer Sequences} {\bf 18} (2015), 
\href{https://cs.uwaterloo.ca/journals/JIS/VOL18/Turek/turek3.html}{Article 15.3.4}.


\end{thebibliography}
\end{document}